\theoremstyle{plain}
\newtheorem{theorem}{Theorem}
\newtheorem{lemma}[theorem]{Lemma}
\theoremstyle{definition}
\numberwithin{theorem}{section}
\numberwithin{equation}{section}
\newcommand{\R}{\mathbb{R}}
\newcommand{\cF}{\mathcal{F}}
\newcommand{\cH}{\mathcal{H}}
\def\CO{{\mathcal O}}
\newcommand{\e}{{\epsilon}}
\newcommand{\al}{{\alpha}}
\newcommand{\G}{{\Gamma}}
\newcommand{\lam}{{\lambda}}           
\newcommand{\om}{\omega}
\def\CO{{\mathcal O}}
\renewcommand{\d}{\mathrm{d}}
\def\<{\langle}
\def\>{\rangle}
\newcommand{\ddim}{\operatorname{d\:\!Im}}
\newcommand{\ddre}{\operatorname{d\:\!Re}}
\newcommand{\ran}{\rangle}
\newcommand{\lan}{\langle}
\newcommand{\ra}{\rightarrow}
\newcommand{\Ran}{\operatorname{Ran}}
\newcommand{\p}{{\partial}}
\newcommand{\bfone}{{\bf 1}}
\newcommand{\one}{\mathbf{1}}
\newcommand{\DETAILS}[1]{}
\newcommand{\hp}{H_{p}}
\newcommand{\chp}{\mathcal{H}_{p}}
\newcounter{foo}
\begin{document}

\title[Comment on the photon number condition]{Comment on  the photon number bound and Rayleigh scattering}

\author[J. Faupin]{J{\'e}r{\'e}my Faupin}
\address[J. Faupin]{Institut de Math{\'e}matiques de Bordeaux \\
UMR-CNRS 5251, Universit{\'e} de Bordeaux 1, 
33405 Talence Cedex, France}
\email{jeremy.faupin@math.u-bordeaux1.fr}
\author[I. M. Sigal]{Israel Michael Sigal} 
\address[I. M. Sigal]{Department of Mathematics \\
University of Toronto, 
Toronto, ON M5S 2E4, Canada}
\email{im.sigal@utoronto.ca}

\begin{abstract}
We discuss  photon number bounds for a system of non-relativistic particles coupled to the quantized electromagnetic field (non-relativistic QED),  below the ionization threshold. 
Such a bound was assumed in the proof of asymptotic completeness for Rayleigh scattering in our paper \cite{FaSig} (Condition (1.20) of Theorem 1.3 in \cite{FaSig}). We show how this assumption can be weakened and verified for a class of hamiltonians.
\end{abstract}

\maketitle

\section{ Introduction}\label{sec:intro}

In this note we discuss photon number bounds for non-relativistic particle systems coupled to quantized electromagnetic or phonon field.    (We use the term photon for both photon and phonon.) Such a bound was first proved by  W. De Roeck and A. Kupiainen in  \cite{DRK} for the spin-boson model and a variant of such a bound was assumed  in our proof of asymptotic completeness below the ionization threshold, i.e. for Rayleigh scattering, in \cite{FaSig}. Specifically, we assumed that   the photon number is bounded uniformly in time (Condition (1.20) of Theorem 1.3 in \cite{FaSig}). In this note we show how this assumption can be weakened   and verified for a class of hamiltonians.

In \cite{FaSig}  we consider  the dynamics generated by the Hamiltonian  (here and in what follows we use, without mentioning it, the notation of the  paper \cite{FaSig})
\begin{equation}\label{H}
H=\hp + H_f + I(g),
\end{equation}
acting on  the state space ${\cH}:=\chp\otimes \cF$. Here, $\cH_{p}$ is the particle state  space,  $\cF$ is the bosonic Fock space based on the one-photon space $L^2(\R^3)$,  $\hp$ is a self-adjoint Hamiltonian acting on $\cH_{p}$, and $H_f:=\d \Gamma( \om )$ (where $\omega(k)=|k|$ is the photon dispersion law and $k$ is the photon wave vector) is the photon Hamiltonian acting on $\cF$.

The operator $I(g)$, acting on ${\cH}$, represents an interaction energy labeled by  a coupling family $g(k)$ of operators acting on $\chp$. It is of the form
\begin{equation}\label{I}
I(g):= \int (g^* (k)\otimes a(k)+g (k)\otimes a^*(k))dk,
\end{equation}
with $a^*(k)$ and  $a(k)$ the creation and annihilation operators acting on $\cF$. The coupling operators $g (k)$ are assumed to satisfy
 \begin{equation}\label{g-est}
\|\eta^{|\al|}\p^\al g (k)\|_{\chp} \lesssim |k|^{\mu-|\al|} \xi (k),\quad |\al| \le 2,
\end{equation}
where $\xi (k)$ is an ultraviolet cutoff (a smooth function decaying sufficiently rapidly at infinity) and $\eta$ is an estimating operator (a bounded, positive operator with unbounded inverse) on $\chp$, satisfying
\begin{equation}\label{eta-bnd}
\|\eta^{-n}f(H)\|\lesssim 1 ,
\end{equation}
for any $n=1, 2$ and  $f \in \mathrm{C}_0^\infty( (-\infty,\Sigma))$, where  $\Sigma$ is the ionization threshold.

The proofs  presented here, as well as - as was mentioned in \cite{FaSig} - those in \cite{FaSig}, can be extended to the minimal coupling model with the standard quantum Hamiltonian (see \cite{BoFaSig} for the notations used)
\begin{equation*}
H = \sum  \limits_{j=1}^n \frac{1}{2 m_j} \big( - i \nabla_{x_j} - g_{j} A_\kappa ( x_j ) \big)^2 + V (x) + H_f .
\end{equation*}

Let $\psi_t= e^{ - i t H }\psi_0$ be the solution of the Schr\"odinger equation $ i\p_t\psi_t =H\psi_t$ with an initial condition  $\psi_0\in  \Ran \, E_{(-\infty, \Sigma)}(H)$. The assumption (1.20) of Theorem 1.3 of  \cite{FaSig}  states that
\begin{itemize}
\item For any $\psi_0 \in  D(N^{1/2})$ and uniformly in $t \in [0 , \infty)$, 
\begin{equation}\label{npb-unif}
\| N^{1/2} \psi_t \| \lesssim \| N^{1/2} \psi_0 \| + \| \psi_0 \| .
\end{equation}
\end{itemize}
It can be weakened to one of the following conditions:
\DETAILS{The assumption, in the previous version that the uniform bound
\begin{equation}\label{npb-unif}
\| N^{1/2} \psi_t \| \lesssim \| N^{1/2} \psi_0 \| + \| \psi_0 \| ,
\end{equation}
holds for any $\psi_0$ in $D( N^{1/2} )$, can now be weakened in two respects: \eqref{npb-unif} }

\smallskip

\begin{itemize}
\item [(i)]
\eqref{npb-unif} holds {\it only} for initial states $\psi_0 \in f (H) D(N^{1/2})$,  {\it with} $f \in \mathrm{C}_0^\infty( ( E_{\mathrm{gs}}, \Sigma) )$, 

\item [(i')] 
 There exists a set $\mathcal{D}$ such that $\mathcal{D} \cap D ( \d\G( \omega^{-1/2}Ê\langle y \rangle \omega^{-1/2} )^{\frac12} )$ is dense in $\mathrm{Ran} \, E_{(-\infty , \Sigma) }(H)$ and, for any $\psi_0 \in \mathcal{D}$, 
\begin{equation}\label{npb-unif2}
\| \d\Gamma( \omega^{-1} )^{\frac12} \psi_t \| \lesssim C( \psi_0 ) ,
\end{equation}
uniformly in $t \in [0 , \infty)$, where $C( \psi_0 )$ is a positive constant depending on $\psi_0$.
\end{itemize}

\smallskip

Condition (i) deals only with states below the ionization threshold, while (i')  does not specify the dense set of $\psi_0$'s  and, as a result,  can be verified for the massless spin-boson model by modifying slightly  the proof of De Roeck and Kupiainen in \cite{DRK}. 
 Hence the asymptotic completeness in this case holds with no implicit conditions.

\smallskip

 To verify \eqref{npb-unif2} for the spin-boson model,  we proceed precisely in the same way as in \cite{DRK}, but using the stronger condition on the decay of correlation functions, 
\begin{equation}\label{correl-decay}
 \int_0^\infty d t  \,   (1+ t )^{\alpha} | h(t) | < \infty, \quad \textrm{with} \quad   h(t) :=  \int_{\R^3} d k \,   e^{- i t | k |}  ( 1 + |k|^{-1} )  | g(k) |^2 ,
 \end{equation}
 for some $\alpha \ge 1$, instead of  Assumption A of \cite{DRK}, and bounding the observable $( 1 + \kappa \d \Gamma( \omega^{-1/2} ) )^2$   instead of $e^{\kappa N} $. Assumption C of \cite{DRK} on initial states has to be replaced in the same manner. Assuming that 
 \eqref{g-est} 
 is satisfied with $\mu>0$ (and $\eta=1$), we see that \eqref{correl-decay} holds with $\alpha = 1 + 2\mu$.

The form of the observable  $e^{\kappa N} $ enters \cite{DRK} through the estimate $\|K_{u, v}\|_\diamond \le \lam^2 C | h(u-v)|$ of the operator $K_{u, v}$ defined in \cite[(3.4)]{DRK} and the standard estimate \cite[(4.36)]{DRK}. Both extend readily to our case (the former, with $h(t)$ given in \eqref{correl-decay}).  Moreover, \cite[(4.36)]{DRK} is used in the proof that pressure vanishes - Eq (4.39) in \cite{DRK} - and the latter also follows from our Proposition A.1. 
(We can also use the observable $ \Gamma( \om^{-\lam} )=\d \Gamma( -\lam\ln \om )$ and analyticity - rather than perturbation - in $\lam$.).

 Now we comment on the modifications needed  in order to prove  the result of Theorem 1.3 of \cite{FaSig}  under the new assumptions. These modifications concern only the proof of  the existence of the Deift-Simon wave operators given in Theorem 5.1 of \cite{FaSig}.  
 \begin{itemize}
\item To prove Theorem 5.1 under Assumption (i), we need minor modifications in the proof, relying on slightly strengthened  Lemma 5.2, by using a new estimate on the growth of the observable $N^2$ (in addition to $N$). 
\item The proof of Theorem 5.1 under Assumption (i') is analogous to the one  for Assumption (i). The only difference is that we do not need to introduce an artificial cutoff in the number operator. Instead  we use additional `weighted' propagation estimates, which  are straightforward modifications of the estimates (3.3)--(3.4) in \cite{FaSig}. 
\end{itemize}

In the next two sections we present detailed modifications in our proof in \cite{FaSig}, needed to prove asymptotic completeness for Rayleigh scattering under either Assumption (i) or (i').

We use the notation $\|\psi\|_\rho^2 := \| (\d\G(\omega^{\rho}) +1)^{\frac{1}{2}}\psi_0 \|$  from \cite{FaSig}.

\medskip

\noindent {\bf Acknowledgements.} We are grateful to J\"urg Fr\"ohlich for emphasizing to us the importance of the photon number bounds and to Marcel Griesemer  for criticism of Condition (1.20) of Theorem 1.3 in \cite{FaSig}, which led to this note. The second author's research was supported in part by NSERC under Grant No. NA7901.

\section{Adjustments in Proof of Theorem 5.1 under Condition ($\mathrm{i}$)}\label{sec:pf-ac}

The part of Theorem 5.1 which requires a modification is showing that 
\begin{itemize}
\item the family $W(t):=e^{i\hat{H} t} \check{\Gamma} (j) e^{-iHt}$ form a strong Cauchy sequence as $t\ra\infty$. \end{itemize}
We present here  the corresponding changes. 
Let $\psi_0\in f(H) D( \d\Gamma( \omega^{-1} )^{1/2} )$, $f \in \mathrm{C}_0^\infty( (E_{\mathrm{gs}} , \Sigma ) )$. Lemma \ref{lm:f(H)W2}, proven below, implies that
\begin{align} \label{eq:AC1}
W(t) \psi_0 &= e^{i\hat{H} t} f_1( \hat H ) \check{\Gamma} (j) e^{-i Ht} f_1(H) \psi_0 
+ \mathcal{O}( t^{-\alpha + \frac{1}{2+\mu}}   \|  \psi_0 \|_{-1}), 
\end{align}
where $f_1 \in \mathrm{C}_0^\infty( (E_{\mathrm{gs}} , \Sigma ) )$ is such that $f_1 f = f$. Hence, since our conditions on $\alpha$ imply $\alpha >  1 / ( 2 + \mu )$, it suffices to show that 
\begin{align*}
\widetilde W(t) := e^{i\hat{H} t} f_1( \hat H ) \check{\Gamma} (j) e^{-i Ht} f_1(H)
\end{align*}
form a strong Cauchy sequence as $t\ra\infty$. This is done exactly as in \cite{FaSig} for $W(t)$. It remains to prove the following lemma which is strengthening of the corresponding lemma (Lemma 5.2) of \cite{FaSig}.

The rest of the proof of Theorem 5.1 of \cite{FaSig} under Assumption (i) is exactly the same as in \cite{FaSig}.  \qed

\smallskip
\begin{lemma} \label{lm:f(H)W2}
Assume \eqref{g-est} with $\mu > 0$ and \eqref{eta-bnd}. For any $f \in \mathrm{C}_0^\infty( \Delta )$, $\Delta \subset (E_{\mathrm{gs}} , \Sigma )$, and $\psi_0 \in \Ran  E_\Delta (H) \cap D( \d\Gamma( \omega^{-1} )^{1/2} )$,
\begin{align}
\| ( \check{\Gamma}(j) f(H)  - f( \hat{H} ) \check{\Gamma}(j) ) \psi_t\| \lesssim t^{-\alpha+\frac{1}{2+\mu}} \|Ê\psi_0 \|_{-1}. \label{eq:e0_1}
\end{align}
\end{lemma}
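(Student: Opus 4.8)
The plan is to reduce the estimate to a bound on the \emph{intertwining defect} $\hat H\check\Gamma(j)-\check\Gamma(j)H$ and then to extract the time-decay from the propagation estimates of \cite{FaSig}. First I would represent $f(H)$ and $f(\hat H)$ through an almost-analytic (Helffer--Sjöstrand) extension $\tilde f$ of $f\in\mathrm{C}_0^\infty(\Delta)$, writing $f(H)=\frac{1}{\pi}\int_{\C}\bar\partial\tilde f(z)\,(z-H)^{-1}\,\d x\,\d y$ and similarly for $f(\hat H)$. Commuting $\check\Gamma(j)$ through and using the resolvent identity gives the single formula
\[
\check\Gamma(j)f(H)-f(\hat H)\check\Gamma(j)=\frac{1}{\pi}\int_{\C}\bar\partial\tilde f(z)\,(z-\hat H)^{-1}\,B\,(z-H)^{-1}\,\d x\,\d y,\qquad B:=\hat H\check\Gamma(j)-\check\Gamma(j)H .
\]
Since $\bar\partial\tilde f(z)$ vanishes to high order as $\im z\to0$, the only remaining task is to bound the sandwiched defect $(z-\hat H)^{-1}B(z-H)^{-1}\psi_t$ with constants at most polynomially singular in $|\im z|^{-1}$.

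Next I would compute $B$ explicitly. Its kinetic part produces the standard second-quantized derivative terms built from $[\om,j_0]$ and $[\om,j_\infty]$; these are supported in the intermediate photon region where $\nabla j_\#\neq0$ and therefore carry a small factor given by a negative power of the partition scale $R=R(t)$. Its interaction part measures the failure of $I(g)$ to intertwine with $\check\Gamma(j)$ and is governed by the coupling estimate \eqref{g-est}; here the hypothesis $\mu>0$ is essential, since it renders the infrared weight $\int\|g(k)\|^2(1+|k|^{-1})\,\d k$ finite and thus allows the escaping-channel contribution to be controlled in the $\d\Gamma(\om^{-1})^{1/2}$-norm rather than in $N^{1/2}$. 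Because $f$ localises energies in $\Delta\subset(E_{\mathrm{gs}},\Sigma)$, I would invoke \eqref{eta-bnd} to absorb the estimating-operator weights $\eta^{|\al|}$ appearing in \eqref{g-est} into bounded operators on $\Ran E_\Delta(H)$.

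With $B$ in hand, I would apply the identity to $\psi_t=e^{-itH}\psi_0$ and insert the weighted propagation estimates that are straightforward modifications of (3.3)--(3.4) of \cite{FaSig}, together with the new bound on the growth of $N^2$ mentioned above. The intermediate-region terms are integrated against these propagation estimates, which convert them into a factor $\|\psi_0\|_{-1}$ times a power of $t$, the controlled polynomial growth of $N$ being beaten by that decay. Optimising the partition scale $R(t)$—balancing the kinetic-commutator error, which improves as $R$ grows, against the interaction error, which deteriorates with $R$ and is tied through \eqref{g-est} to the exponent $\mu$—then produces the stated decay $t^{-\alpha+\frac{1}{2+\mu}}$, the power $\frac{1}{2+\mu}$ reflecting the infrared strength $\mu$ of the coupling.

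The main obstacle is this last step: controlling the interaction remainder and the soft-photon contributions \emph{uniformly in time} using only the weaker weight $\d\Gamma(\om^{-1})^{1/2}$ in place of the uniformly bounded $N^{1/2}$ of the old Condition (1.20). One must track the infrared region $k\to0$ carefully—precisely where $\mu>0$ enters—while keeping the $|\im z|^{-1}$ singularities of the two resolvents under control against the vanishing of $\bar\partial\tilde f$, and ensuring that the polynomial-in-$t$ growth allowed for $N$ does not overwhelm the propagation decay. The bookkeeping that turns these competing estimates into the precise exponent $-\alpha+\frac{1}{2+\mu}$ is the delicate core of the argument; once it is established, the remaining steps follow the corresponding computation in \cite{FaSig} essentially verbatim.
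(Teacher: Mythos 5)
Your overall skeleton is right --- Helffer--Sj\"ostrand representation, the defect $B=\hat H\check\Gamma(j)-\check\Gamma(j)H$ split into a kinetic part built from $[\om,j_0],[\om,j_\infty]$ and an interaction part controlled by \eqref{g-est} with $\mu>0$ --- and this matches the paper's opening moves. But the proposal goes wrong, or at least stays vague, at exactly the point where the lemma's content lies: the origin of the exponent $-\alpha+\frac{1}{2+\mu}$. There is no partition scale $R(t)$ to optimise here; the adiabatic scale is already fixed at $ct^\alpha$ by the setup of Theorem 5.1 of \cite{FaSig}, so $[\om,j_\#]=\CO(t^{-\alpha})$ gives $\|\tilde G_0(N+1)^{-1}\|\lesssim t^{-\alpha}$ with $\alpha$ a given parameter, not a quantity to be balanced against the interaction error. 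The factor $t^{\frac{1}{2+\mu}}$ is a \emph{loss}, not the outcome of an optimisation: it is the allowed growth of the photon number along the flow, and the key new estimate you are missing is
\begin{equation*}
\| N\psi_t\| \lesssim t^{\frac{1}{2+\mu}}\,\|\psi_0\|_{-1},
\end{equation*}
which the paper proves by the operator inequality $N^2\le \d\Gamma(\om)\,\d\Gamma(\om^{-1})$ (Cauchy--Schwarz in $k$), a commutator argument showing $\d\Gamma(\om)[\d\Gamma(\om^{-1})^{1/2},(H-E_{\mathrm{gs}}+1)^{-1}]$ is bounded, and then the low-momentum bound (Proposition A.1 of \cite{FaSig}) giving $\|\d\Gamma(\om^{-1})^{1/2}\psi_t\|\lesssim t^{\frac{1}{2+\mu}}\|\psi_0\|+\|\d\Gamma(\om^{-1})^{1/2}\psi_0\|$. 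This is where the hypothesis $\psi_0\in D(\d\Gamma(\om^{-1})^{1/2})$ and the norm $\|\psi_0\|_{-1}$ actually enter. The weighted propagation estimates you invoke (the modifications of (3.3)--(3.4) of \cite{FaSig}) belong to the Assumption (i$'$) argument of Section 3 and play no role in this lemma. Finally, the interaction part $G_1$ is handled exactly as in (5.30)--(5.31) of \cite{FaSig}, yielding the much faster decay $t^{-(\mu+\frac32)\alpha}$, so it does not compete with the kinetic term at all; your picture of a trade-off between the two terms is not how the exponent arises. Without the $N^2\le\d\Gamma(\om)\d\Gamma(\om^{-1})$ device and the appeal to Proposition A.1, the proposal cannot produce the stated power of $t$.
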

\begin{proof}
Using the Helffer-Sj{\"o}strand formula, we compute $ \check{\Gamma}(j) f(H) \psi_t - f( \hat{H} ) \check{\Gamma}(j) \psi_t=R$, where
\begin{align}
R&:= \frac{1}{\pi} \int \partial_{\bar z} \widetilde{f} (z) ( \hat H - z )^{-1} ( \hat H \check{\Gamma}(j) - \check{\Gamma}(j) H ) ( H - z )^{-1} \psi_t
 \ddre z \ddim z , \label{eq:e2_1}
\end{align}
and $\widetilde f$ is an almost analytic extension of $f$ with the usual properties. We have $\hat H \check{\Gamma}(j) - \check{\Gamma}(j) H = \tilde G_0 -i G_1$, where $\tilde G_0:= U \d {\Gamma} (j, \underline{\omega} j - j \omega)$ and $ G_1:=
 ( I ( g) \otimes \bfone ) \check{\Gamma} (j) - \check{\Gamma} (j) I ( g)$.

We consider $\tilde G_0$. We have $\underline{\omega} j - j \omega =  ( [ \omega , j_0 ] , [ \omega , j_\infty ] )$, and, by Corollary B.3 of Appendix B of \cite{FaSig}, 
\begin{align}\label{eq:s1}
[ \omega , j_\# ] = \frac{ \theta_\e }{ c t^\alpha}  j'_\# + r ,
\end{align}
where $j_\#$ stands for $j_0$ or $j_\infty$, $j'_\#$ is the derivative of $j_\#$ as a function of $\frac{b_\e}{c t^{\al}}$, and $r$ satisfies $\|r \| \lesssim  t^{-2 \alpha+\kappa}$. Since $\theta_\e \le 1$ and since $\kappa < \alpha$, we deduce that $[ \omega , j_\# ] = \CO( t^{-\alpha} )$. By  (C.2) of Appendix C of \cite{FaSig},  
 we then obtain that
\begin{align*}
\| \tilde G_0 (N+1)^{-1} \| = \| (\hat N+1)^{-\frac12} \tilde G_0 (N+1)^{-\frac12} \| \lesssim t^{-\alpha} .
\end{align*}
The equality above follows from $(\hat N + 1 )^{-1/2} \tilde G_0 = \tilde G_0 (N+1)^{-1/2}$. Using, for instance, that $H \in C^1(N)$, we verify that $\|(N+1) ( H-z )^{-1} (N+1)^{-1} \| \lesssim | \mathrm{Im} \, z |^{-2}$, and hence
\begin{align}\label{eq:s3_1}
\|  \tilde G_0 ( H - z )^{-1} \psi_t \| &\lesssim t^{-\alpha} | \mathrm{Im} z |^{-2} \| (N+1) \psi_t \| .
\end{align}

Now, we need the following result, which is a consequence  of the low-momentum bound (A.1) of \cite{FaSig} and whose proof is given below:
Under \eqref{g-est} with $\mu > 0$, we have that 
\begin{equation}\label{lm-bnd_2}
\| N\psi_t\| \lesssim  t^{ \frac{1}{2+\mu} } \| \psi_0 \|_{-1},
\end{equation}
provided $\psi_0 \in f (H) D(\d\G( \omega^{-1} )^{1/2} )$, with $f \in \mathrm{C}_0^\infty( \R )$.  Applying this estimate, we obtain
\begin{align}\label{eq:s3_2}
\|  \tilde G_0 ( H - z )^{-1} \psi_t \| &\lesssim t^{-\alpha+\frac{1}{2+\mu}} | \mathrm{Im} z |^{-2}  \|  \psi_0 \|_{-1}. 
\end{align}

As in  (5.30)--(5.31) of \cite{FaSig}, we have in addition
\begin{align*}
\| G_1 (N+1)^{-\frac12} E_\Delta(H) \| \lesssim t^{-(\mu+\frac32)\alpha} ,
\end{align*}
and hence, using, as above, that $\|(N+1)^{1/2} ( H-z )^{-1} (N+1)^{-1/2} \| \lesssim | \mathrm{Im} \, z |^{-2}$, we obtain
\begin{align}\label{eq:e3b_3}
\| G_1 ( H - z )^{-1} \psi_t \| \lesssim t^{-(\mu+\frac32)\alpha} | \mathrm{Im} z |^{-2}  \|  \psi_0 \|_{N}.
\end{align}

From \eqref{eq:e2_1}, \eqref{eq:s3_2}, \eqref{eq:e3b_3}, the properties of the almost analytic extension $\tilde f$ and the estimate 
$\|( H - z )^{-1}  \| \lesssim  | \mathrm{Im} z |^{-1}$, we conclude that \eqref{eq:e0_1} holds

Finally we prove \eqref{lm-bnd_2}.  By the Cauchy-Schwarz inequality, we have $N^2 \le \d\Gamma( \omega ) \d\Gamma( \omega^{-1} )$, and hence
\begin{align*}
\langle N^2 \rangle_{\psi_t} &\le \langle \d\Gamma( \omega^{-1} )^{\frac12} \d \Gamma( \omega ) \d \Gamma( \omega^{-1} )^{\frac12} \rangle_{\psi_t} \notag \\
&= \langle \d\Gamma( \omega^{-1} )^{\frac12} \d \Gamma( \omega ) (H - E_{\mathrm{gs}} + 1 )^{-1} \d \Gamma( \omega^{-1} )^{\frac12} ( H - E_{\mathrm{gs}} + 1 ) \rangle_{\psi_t} \notag \\
&\quad+ \langle \d\Gamma( \omega^{-1} )^{\frac12} \d \Gamma( \omega ) [ \d \Gamma( \omega^{-1} )^{\frac12} , (H- E_{\mathrm{gs}} + 1)^{-1} ] ( H- E_{\mathrm{gs}} + 1 ) \rangle_{\psi_t} .
\end{align*}
Under Assumption \eqref{g-est} with $\mu > 0$, one verifies that $\d \Gamma( \omega ) [ \d \Gamma( \omega^{-1} )^{\frac12} , (H- E_{\mathrm{gs}} + 1)^{-1} ]$ is bounded. Since $\d \Gamma( \omega ) (H- E_{\mathrm{gs}} + 1)^{-1}$ is also bounded, we obtain
\begin{align}
\langle N^2 \rangle_{\psi_t} \lesssim \| \d\Gamma( \omega^{-1} )^{\frac12} \psi_t \| &\big ( \| \d\Gamma( \omega^{-1} )^{\frac12} ( H- E_{\mathrm{gs}} + 1) \psi_t \|\notag \\& + \| ( H- E_{\mathrm{gs}} + 1) \psi_t \| \big ). \label{eq:AB1}
\end{align}
Applying Proposition A.1 of \cite{FaSig} gives
\begin{align}
\| \d\Gamma( \omega^{-1} )^{\frac12} \psi_t \| \lesssim t^{\frac{1}{2+\mu}} \| \psi_0 \| + \| \d\Gamma( \omega^{-1} )^{\frac12}  \psi_0 \| , \label{eq:AB2}
\end{align}
and
\begin{align}
\| \d\Gamma( \omega^{-1} )^{\frac12} ( H- E_{\mathrm{gs}} + 1) \psi_t \| &\lesssim t^{\frac{1}{2+\mu}} \| \psi_0 \| + \| \d\Gamma( \omega^{-1} )^{\frac12} ( H- E_{\mathrm{gs}} + 1) \psi_0 \| \notag \\
&\lesssim t^{\frac{1}{2+\mu}} \| \psi_0 \| + \| \d \Gamma( \omega^{-1} )^{\frac12} \psi_0 \| , \label{eq:AB3}
\end{align}
where we used in the last inequality that $\d\Gamma( \omega^{-1} )^{\frac12} \tilde f(H) \d\Gamma( \omega^{-1} )^{-\frac12}$ is bounded for any $\tilde f \in \mathrm{C}_0^\infty( \mathbb{R} )$ (this can be verified, for instance, by using that $H \in C^1( \d\Gamma( \omega^{-1} ) )$). Combining \eqref{eq:AB1}, \eqref{eq:AB2} and \eqref{eq:AB3}, we obtain
\eqref{lm-bnd_2} .
This completes the proof of Lemma \ref{lm:f(H)W2}. \end{proof}

\medskip

\section{The proof of the existence of $W_+$ under  Assumption ($\mathrm{i}')$}\label{sec:pfthm5.1-condi'}

The proof of the existence of $W_+$  under Assumption (i')  is similar to the  proof under Assumption (i), except that we do not need to introduce the cutoff $\chi_m$. We use instead  the following weighted propagation estimates, which are  straightforward extensions of the estimates of Theorem  3.1 of \cite{FaSig}: 
\begin{align}\label{mve1'_2}
\int_1^\infty dt\  t^{-\beta}
\| \d {\Gamma} ( \rho_1^*\chi_{\frac{b_\e}{ct^{\beta}} =1}\rho_1Ê)^{\frac{1}{2}} \psi_t \|^2 \lesssim  \| \psi_0 \|^2 ,
\end{align}
for  $\mu$ and $\beta$ as in Theorem 3.1 and any $\psi_0 \in \mathcal{H}$, and, if in addition Assumption (i') holds,  
\begin{align}
& \int_1^\infty dt\  t^{-\beta} \| \d {\Gamma} ( \omega^{-1/2} \chi_{\frac{b_\e}{ct^{\beta}} =1} \omega^{-1/2} )^{\frac{1}{2}} \psi_t \|^2 \lesssim  C( \psi_0 ) ,  \label{mve1_2}
\end{align}
and
\begin{align}
& \int_1^\infty dt\  t^{-\beta} \| \d {\Gamma} ( \rho_{-1}^*\chi_{\frac{b_\e}{ct^{\beta}} =1}\rho_{-1}Ê)^{\frac{1}{2}} \psi_t \|^2 \lesssim  C( \psi_0 ) , \label{mve1'_3}
\end{align}
for any $\psi_0 \in \mathcal{D}$. Here $\rho_\nu:= \chi\theta_\e^{1/2} \omega^{\nu/2}$ (recall that $\chi\equiv \chi_{(\frac{|y|}{\bar c t})^2 \le 1}$). Likewise, under Assumption (i') 
the proof of the maximal velocity estimate of \cite{BoFaSig}, in the form (1.9) of \cite{FaSig}, can easily be extended to the following weighted maximal velocity estimate:
\begin{equation}\label{maxvel-est_2}
\big\Vert \d \Gamma \big( \omega^{-1/2} \chi_{ \vert y \vert  \geq \bar{\mathrm{c}} t  } \omega^{-1/2} \big)^{\frac{1}{2}} \psi_t  \big\Vert \lesssim t^{- \gamma} \big ( \big\Vert ( \d \Gamma ( \omega^{-1/2} \< y \> \omega^{-1/2} ) + 1 )^{\frac12}  \psi_0 \big\Vert + C( \psi_0 ) \big ),
\end{equation}
for any $ \bar{\mathrm{c}}>1$, $\gamma < \min ( \frac{ \mu }{ 2 } \frac{ \bar{\mathrm{c}} - 1 }{ 2 \bar{\mathrm{c}} - 1 } , \frac12 )$ and $\psi_0 \in \mathcal{D} \cap D ( \d\G( \omega^{-1/2}Ê\langle y \rangle \omega^{-1/2} )^{\frac12} )$.

We only mention that to obtain for instance \eqref{mve1_2}, we estimate the interaction term using the estimate (2.11) of \cite{FaSig} with $\delta = - 1/2$ together with Lemma B.6 of Appendix B of \cite{FaSig} and \eqref{npb-unif2}.

 Now, let $\psi_0 \in \mathcal{D} \cap D ( \d\G( \omega^{-1/2}Ê\langle y \rangle \omega^{-1/2} )^{\frac12} )$. We decompose $(\widetilde W(t') - \widetilde W(t))\psi_0$ as in Equations (5.15)--(5.20) of \cite{FaSig}. Using the commutator estimates of Appendix B of \cite{FaSig} and Hardy's inequality, we verify that 
\begin{equation*}
\rho_{-1}^* ( j'_0 , j'_\infty )\rho_{1} = \theta_\e^{1/2}\chi ( j'_0 , j'_\infty )\chi \theta_\e^{1/2} + \CO( t^{-\alpha + (1+\kappa )/2 } ),
\end{equation*}
and likewise for the remainder terms $\mathrm{rem}_t$. Hence Equations  (5.19)--(5.20) of \cite{FaSig} can be transformed into
\begin{align}\label{udj2_2}
&\underline dj = \frac{1}{c t^{\al}} \rho_{1}^* ( j'_0 , j'_\infty )\rho_{-1} +Ê\omega^{1/2} \mathrm{rem}'_t \, \omega^{-1/2} \\ 
&\mathrm{rem}'_t = \mathrm{rem}_t + \CO( t^{ - 2\alpha + (1+\kappa )/2 } ),
\end{align}
where $\mathrm{rem}_t$ is given in (5.20) of \cite{FaSig}. These relations give
\begin{align}\label{tildeG0-deco_2}
G_0 = \widetilde G_0'+  \mathrm{ Rem}'_{t},
\end{align}
where $\widetilde G'_0 := \frac{1}{ct^\alpha} U \d \Gamma ( j ,  \widetilde{ \underline c_t} )$, with  $\widetilde{ \underline c_t} =(\widetilde c_0, \widetilde c_\infty):= ( \rho_{1}^*  j'_0 \rho_{-1} , \rho_{1}^* j'_\infty \rho_{-1} )$, and
\begin{equation*}
\mathrm{Rem}'_t := G_0 - \widetilde G'_0 = U \d \Gamma ( j , \mathrm{rem}'_t ).
\end{equation*}

Next, we consider $\widetilde A = \sup_{\|\hat\phi_0\| = 1}| \int_t^{t'}ds\lan \hat\phi_s,  G_0 \psi_{s}\ran|$, where $\hat\phi_s =e^{-i\hat{H} s} f(\hat H) \hat \phi_0$. Let 
\begin{align*}
& a_{0} = \rho_{1}^*  | j'_0 |^{1/2} , \quad b_{0} = | j'_0 |^{1/2}\rho_{-1}, \\
& a_{\infty } = \rho_{1}^*  | j'_\infty |^{1/2}, \quad b_{\infty } = | j'_\infty |^{1/2}\rho_{-1}.
\end{align*}
We have $\widetilde c_0 = - a_{0} b_{0}$, $\widetilde c_\infty = a_{\infty } b_{\infty }$. 
Exactly as for  (C.1) of Appendix C of \cite{FaSig}, 
one can show that, if $c = ( a_{0} b_{0} ,  a_{\infty } b_{\infty } )$,  where $  a_{0}, b_{0} ,  a_{\infty },  b_{\infty } $ are operators on  $\mathfrak{h}$, then 
 \begin{align} \label{checkG-ineq'_2}
| \lan \hat\phi, \d \check{\Gamma} (j, c) \psi\ran | & \le  \| \d \Gamma( a_{0} a_{0}^* )^{\frac12} \otimes \one \hat \phi \| \| \d {\Gamma} ( b_{0}^* b_{0} )^{\frac12} \psi \| \notag \\
&+ \| \one \otimes \d \Gamma(  a_{\infty }  a_{\infty }^* )^{\frac12} \hat \phi \| \| \d {\Gamma} ( b_{\infty }^* b_{\infty } )^{\frac12} \psi \| .
\end{align} 
Hence  $\widetilde G'_0$ satisfies
\begin{align}\label{G0'-est_2}
|\lan \hat\phi, \widetilde G'_0 \psi\ran | &\le \frac{1}{ct^\alpha} \big (  \| \d \Gamma( a_{0} a_{0}^* )^{\frac12} \otimes \one \hat \phi \| \| \d {\Gamma} ( b_{0}^* b_{0} )^{\frac12} \psi \| \notag \\
&+ \| \one \otimes \d \Gamma(  a_{\infty }  a_{\infty }^* )^{\frac12} \hat \phi \| \| \d {\Gamma} ( b_{\infty }^* b_{\infty })^{\frac12} \psi \| \big ).
\end{align}
By the Cauchy-Schwarz inequality, \eqref{G0'-est_2} implies
\begin{align*}
\int_t^{t'}ds | \lan \hat\phi_s,  \widetilde G'_0 \psi_{s}\ran | & \lesssim \Big ( \int_t^{t'} ds \, s^{-\alpha}
\| \d \Gamma( a_{0} a_{0}^*  )^{\frac12} \otimes \one \hat\phi_s\|^2 \Big )^{\frac12} \Big ( \int_t^{t'} ds \, s^{-\alpha} \| \d {\Gamma} ( b_{0}^* b_{0}  )^{\frac12} \psi_{s}\|^2 \Big )^{\frac12} \notag \\
& + \Big ( \int_t^{t'} ds \, s^{-\alpha} \| \one \otimes \d \Gamma(  a_{\infty }  a_{\infty }^* )^{\frac12} \hat\phi_s\|^2 \Big )^{\frac12} \Big ( \int_t^{t'} ds \, s^{-\alpha} \| \d {\Gamma} ( b_{\infty }^* b_{\infty } )^{\frac12} \psi_{s}\|^2 \Big )^{\frac12}.
\end{align*}
Since $a_{0} a_{0}^*$ and $ a_{\infty }  a_{\infty }^*$ are of the form $\rho_{1}^* \chi_{b_\e= c t^\al} \rho_{1}$, the weighted minimal velocity estimate \eqref{mve1'_3} implies
\begin{equation*}
\int_1^{\infty} ds \, s^{-\alpha} \| \widehat{\d\Gamma} ( c_{\#1} c_{\#1}^* )^{\frac12} \hat\phi_s\|^2 \lesssim   \| \hat\phi_0 \|^2 ,
\end{equation*}
where $\widehat{\d\Gamma} ( c_{\#1} c_{\#1}^* )^{\frac12}$ stands for $\d \Gamma(  a_{0 }  a_{0 }^* )^{\frac12} \otimes \one$ or $\one \otimes \d \Gamma(  a_{\infty }  a_{\infty }^* )^{\frac12}$. Likewise, since $b_{0 }^* b_{0 }$ and $b_{\infty }^* b_{\infty }$ are of the form $\rho_{-1}^*\chi_{b_\e= c t^\al} \rho_{-1}$, the weighted minimal velocity estimate \eqref{mve1'_2} implies
\begin{equation*}
\int_1^\infty ds\, s^{-\al} \| \d {\Gamma} ( c_{\#2}^* c_{\#2} )^{\frac12} \psi_{s}\|^2 \lesssim   C( \psi_0 ) ,
\end{equation*}
with $c_{\#2} = b_{0}$ or $b_{\infty }$. The last three relations give
\begin{align}\label{eq:G'0tt'_2}
\sup_{\|\hat\phi_0\| = 1} | \int_t^{t'}ds \, \lan \hat\phi_s, \widetilde G'_0 \psi_{s}\ran| \ra 0,\ \quad t, t' \ra \infty.
\end{align}

Applying likewise Lemma C.2 of Appendix C of \cite{FaSig}, 
 one verifies that $\mathrm{Rem}'_t$ satisfies
 \begin{align*}
|Ê\langle \hat \phi ,\mathrm{Rem}'_{t} \psi \ran | \lesssim \| \hat \phi \| \Big ( & t^{-2\alpha + (1+ \kappa)/2 }  \| \d\Gamma( \omega^{-1} )^{\frac12} \psi \| + t^{-1} \| \d\Gamma( \omega^{-1/2} \chi j_\infty'\chi \omega^{-1/2} )^{\frac12} \psi \| \\
& +t^{-\al}\|\d\G( \omega^{-1/2}Ê\chi^2_{(\frac{|y|}{\bar c t})^2 \ge 1} \omega^{-1/2} )^{\frac12} \psi \| \Big ).
\end{align*}
Using \eqref{npb-unif2}, the weighted minimal velocity estimate \eqref{mve1_2} and the weighted maximal velocity estimate \eqref{maxvel-est_2}, we conclude that
\begin{align}\label{eq:Remtt'_2}
\sup_{\|\hat\phi_0\| = 1} | \int_t^{t'}ds \, \lan \hat\phi_s, \mathrm{Rem}'_s \psi_{s}\ran| \to 0, \ \quad t, t' \ra \infty .
\end{align}
Equations \eqref{eq:G'0tt'_2} and \eqref{eq:Remtt'_2} then imply
\begin{equation}\label{tildeG0-contr_2}
\widetilde A = \|\int_t^{t'}ds \, f(\hat H)e^{i\hat{H} s} G_0 \psi_{s}\| \ra 0,\ \quad t, t' \ra \infty.
\end{equation}

The estimate of $ G_1$ is the same as in the proof of Theorem 5.1 of \cite{FaSig}, which shows that $\widetilde W(t)$, and hence $W(t)$, are strong Cauchy sequences. Thus the limit $W_+$ exists. $\Box$

\bigskip

\end{document}